\def\ep{\epsilon_n}
\newcommand{\hG} {\hat{G}}
\newcommand{\hV} {\hat{V}}
\long\def\comment#1{} 
\newtheorem{theorem}{Theorem}[section]
\newtheorem{lemma}[theorem]{Lemma}
\newtheorem{corollary}[theorem]{Corollary}
\newtheorem{definition}[theorem]{Definition}
\begin{document}

\title{A note on the multiple unicast capacity of directed acyclic networks}

\author{\IEEEauthorblockN{Shurui Huang, {\it{Student Member, IEEE}} and Aditya Ramamoorthy, {\it{Member, IEEE}}\\
}
\thanks{This research was supported in part by NSF grant CCF-1018148.}
\authorblockA{Department of Electrical and Computer Engineering\\
Iowa State University, Ames, Iowa 50011\\
Email: \{hshurui, adityar\}@iastate.edu}}

\maketitle
\begin{abstract}
We consider the multiple unicast problem under network coding over
directed acyclic networks with unit capacity edges. There is a set
of $n$ source-terminal ($s_i - t_i$) pairs that wish to
communicate at unit rate over this network. The connectivity
between the $s_i - t_i$ pairs is quantified by means of a
connectivity level vector, $[k_1~k_2 \dots~ k_n]$ such that there
exist $k_i$ edge-disjoint paths between $s_i$ and $t_i$. Our main
aim is to characterize the feasibility of achieving this for
different values of $n$ and $[k_1~ \dots~ k_n]$. For $3$ unicast
connections ($n = 3$), we characterize several achievable and
unachievable values of the connectivity 3-tuple. In
addition, in this work, we have found certain network topologies,
and capacity characterizations that are useful in understanding
the case of general $n$.
\end{abstract}
\section{Introduction}
Network coding has emerged as an interesting alternative to
routing in the next generation of networks. In particular, it is
well-known that the network coding is a provably capacity
achieving strategy for network multicast. The work of \cite{rm}
provides a nice algebraic framework for reasoning about network
coding, and significantly simplifies the proofs of \cite{al}, and
suggests network code design schemes. However,
general network connections, such as multiple unicasts are more difficult
to understand under network coding. In a multiple unicast
connection, there are several source terminal pairs; each source
wishes to communicate to its corresponding terminal. The goal is to
find a characterization of the network resources required to
support this connection using network coding.

The multiple unicast problem has been examined for both directed
acyclic networks \cite{Traskov06}\cite{wangIT10}\cite{HarveyIT}
and undirected networks \cite{LiLiunicast} in previous work. The
work of \cite{yan06isit}, provides an information theoretic
characterization for directed acyclic networks. However, in
practice, evaluating these bounds becomes computationally
infeasible even for small networks because of the large number of
inequalities that are involved. Moreover, these approaches do not
suggest any constructive code design approaches. The work of
\cite{wangIT10}, considers the multiple unicast problem in the
case of two source-terminal pairs, while the work of
\cite{Traskov06} attempts to address it by packing butterfly
networks within the original graph. Das et al. \cite{JafarISIT} consider the multiple unicast problem with an interference alignment approach. For undirected networks, there
is open conjecture as to whether there is any advantage to using
network coding as compared to routing (\cite{LiLiunicast}). Multiple unicast in the presence of link faults and errors, under certain restricted (though realistic) network topologies has been studied in \cite{kamal11}\cite{5205599}.

In this work our aim is to better understand the combinatorial
aspects of the multiple unicast problem over directed acyclic
networks. We consider a network $G$, with unit capacity edges and
source-terminal pairs, $s_i-t_i, i = 1,\dots,n$, such that the
maximum flow from $s_i$ to $t_i$ is $k_i$. Each source contains a
unit-entropy message that needs to be communicated to the
corresponding terminal. Our objective is to determine whether
there exist feasible network codes that can satisfy the demands of
the terminals. This is motivated by a need to find
characterizations that can be determined in a computationally
efficient manner.
\subsection{Main Contributions}
\begin{itemize}
\item For the case of three unicast sessions ($n = 3$), we
identify all feasible and infeasible connectivity levels $[k_1~k_2
~k_3]$. For the feasible cases, we provide efficient linear
network code assignments. For the infeasible cases, we provide
counter-examples, i.e., instances of graphs where the multiple
unicast cannot be supported under any (potentially nonlinear)
network coding scheme. \item We identify certain
feasible/infeasible instances with two unicast sessions, where the
message entropies are different. These are used to arrive at
conclusions for the problem in the case of higher $n ~(> 3)$.
\end{itemize}
This paper is organized as follows. In section \ref{sec:pre}, we introduce
several concepts that will be used throughout the paper. We also describe
the precise problem formulation. Section \ref{sec:vector_routing}
identifies the feasible routing connectivity levels. We discuss the network coding case in
Section \ref{sec:netcod}. Counter examples are given for infeasible connectivity levels.
A feasible connectivity level with vector network coding solution is also provided.
Section \ref{sec:con} concludes the paper.

\section{Preliminaries}
\label{sec:pre}
We represent the network as a directed acyclic graph $G = (V, E)$.
Each edge $e \in E$ has unit capacity and can transmit one symbol
from a finite field of size $q$ per unit time (we are free to
choose $q$ large enough). If a given edge has higher capacity, it
can be treated as multiple unit capacity edges. A directed edge
$e$ between nodes $i$ and $j$ is represented as $(i,j)$, so that
$head(e) = j$ and $tail(e) = i$. A path between two nodes $i$ and
$j$ is a sequence of edges $\{ e_1, e_2, \dots, e_k\}$ such that
$tail(e_1) = i, head(e_k) = j$ and $head(e_i) = tail(e_{i+1}), i =
1, \dots, k-1$. The network contains a set of $n$ source nodes
$s_i$ and $n$ terminal nodes $t_i, i = 1, \dots
n$. Each source node $s_i$ observes a discrete integer-entropy
source, that needs to be communicated to
terminal $t_i$. Without loss of generality, we assume that the
source (terminal) nodes do not have incoming (outgoing) edges. If
this is not the case one can always introduce an artificial source
(terminal) node connected to the original source (terminal) node
by an edge of sufficiently large capacity that has no incoming
(outgoing) edges.


We now discuss the network coding model under consideration in
this paper. For the sake of simplicity, suppose that each source
has unit-entropy, denoted by $X_i$. In scalar linear network coding, the
signal on an edge $(i,j)$, is a linear combination of the signals
on the incoming edges on $i$ or the source signals at $i$ (if $i$
is a source). We shall only be concerned with networks that are
directed acyclic and can therefore be treated as delay-free
networks \cite{rm}. Let $Y_{e_i}$ (such that $tail(e_i) = k$ and
$head(e_i) = l$) denote the signal on edge $e_i \in E$. Then, we
have
\begin{align*}
Y_{e_i} &= \sum_{\{e_j | head(e_j) = k\}} f_{j,i} Y_{e_j} \text{~if $k \in V \backslash \{s_1, \dots, s_n\}$}, \text{~and}\\
Y_{e_i} &= \sum_{j=1}^n a_{j,i} X_j
\text{~~ where $a_{j,i} = 0$ if $X_j$ is not observed at $k$.}
\end{align*}
The coefficients $a_{j,i}$ and $f_{j,i}$ are from the operational field.
Note that since the graph is directed acyclic, it is equivalently possible to
express $Y_{e_i}$ for an edge $e_i$ in terms of the sources
$X_j$'s. If $Y_{e_i} = \sum_{k=1}^n \beta_{e_i, k} X_k$ then we say that the
global coding vector of edge $e_i$ is $\boldsymbol{\beta}_{e_i} =
[ \beta_{e_i, 1} ~\cdots~ \beta_{e_i, n}]$. We shall also
occasionally use the term coding vector instead of global coding
vector in this paper. We say that a node $i$ (or edge $e_i$) is
downstream of another node $j$ (or edge $e_j$) if there exists a
path from $j$ (or $e_j$) to $i$ (or $e_i$).

Vector linear network coding is a generalization of the scalar
case, where we code across the source symbols in time, and the
intermediate nodes can implement more powerful operations.
Formally, suppose that the network is used over $T$ time units. We
treat this case as follows. Source node $s_i$ now observes a
vector source $[X_i^{(1)} ~ \dots ~ X_i^{(T)}]$. Each edge in the
original graph is replaced by $T$ parallel edges. In this graph,
suppose that a node $j$ has a set of $\beta_{inc}$ incoming edges
over which it receives a certain number of symbols, and $\beta_{out}$
outgoing edges. Under vector network coding, $j$ chooses
a matrix of dimension $\beta_{out} \times \beta_{inc}$. Each row
of this matrix corresponds to the local coding vector of an
outgoing edge from $j$.

Note that the general multiple unicast problem, where edges have
different capacities and the sources have different entropies can
be cast in the above framework by splitting higher capacity edges
into parallel unit capacity edges, a higher entropy source into
multiple, collocated unit-entropy sources; and the corresponding
terminal node into multiple, collocated terminal nodes.

An instance of the multiple unicast problem is specified by the
graph $G$ and the source terminal pairs $s_i - t_i,  i = 1, \dots,
n$, and is denoted $<G, \{s_i - t_i\}_{1}^n, \{R_i\}_{1}^n>$,
where the rates $R_i$ denote the entropy of the $i^{th}$ source.
For convenience, if all the sources are unit entropy, we will
refer to the instance by just $<G, \{s_i - t_i\}_{1}^n>$, where
the $s_i-t_i$ connections will occasionally be referred to as
sessions that we need to support.

The instance is said to have a scalar linear network coding
solution if there exist a set of linear encoding coefficients for
each node in $V$ such that each terminal $t_i$ can recover $X_i$
using the received symbols at its input edges. Likewise, it is
said to have a vector linear network coding solution with vector
length $T$ if the network employs vector linear network codes and
each terminal $t_i$ can recover $[X_i^{(1)} ~ \dots ~ X_i^{(T)}]$.

We will also be interested in examining the existence of a routing
solution, wherever possible. In a routing solution, each edge
carries a copy of one of the sources, i.e., each coding vector is
such that at most one entry takes the value $1$, all others are
$0$. Scalar (vector) routing solutions can be defined in a
manner similar to scalar (vector) network codes. We now define
some quantities that shall be used throughout the paper.

\begin{definition}
{\it Connectivity level.} The connectivity level for
source-terminal pair $s_i - t_i$ is said to be $n$ if the maximum
flow between $s_i$ and $t_i$ in $G$ is $n$. The connectivity level
of the set of connections $s_1 - t_1, \dots, s_n - t_n$ is the
vector $[ \text{max-flow}(s_1 - t_1) ~ \text{max-flow}(s_2 - t_2)~
\dots ~ \text{max-flow}(s_n - t_n)]$.
\end{definition}

In this work our aim is to characterize the feasibility of the multiple unicast problem based on the connectivity level of the $s_i -t_i$ pairs. The questions that we seek to answer are of the following form.\\
Suppose that the connectivity level is $[k_1~ k_2~ \dots~ k_n]$. Does any instance always have a linear (scalar or vector) network coding solution? If not, is it possible to demonstrate a counter-example, i.e,  an instance of a graph $G$ and $s_i - t_i$'s such that recovering $X_i$ at $t_i$ for all $i$ is impossible under linear (or nonlinear) strategies?

In this paper, our achievability results will be constructive and
based on linear network coding, whereas the counter-examples will
hold under all possible strategies.

\section{Multiple unicast under routing}
\label{sec:vector_routing}
We begin by providing a simple condition that guarantees the existence of a routing solution.
\begin{theorem}
Consider a multiple unicast instance with $n~$ $s_i - t_i$ pairs such that the connectivity level is $[n ~n ~ \dots ~n]$. There exists a vector routing solution for this instance.
\end{theorem}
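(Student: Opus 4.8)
The plan is to exhibit an explicit vector routing solution with vector length $T = n$. Since the max-flow from $s_i$ to $t_i$ equals $n$, the max-flow/min-cut theorem (Menger's theorem) guarantees, for each $i$, a collection of $n$ edge-disjoint paths $P_i^{(1)}, \dots, P_i^{(n)}$ from $s_i$ to $t_i$. Operating the network over $T = n$ time units, source $s_i$ must deliver the $n$ symbols $X_i^{(1)}, \dots, X_i^{(n)}$ to $t_i$. I would route the symbol $X_i^{(\ell)}$ along the path $P_i^{(\ell)}$, so that each of source $i$'s $n$ disjoint paths carries exactly one of its $n$ symbols, and all $n$ symbols arrive at $t_i$ on distinct incoming edges.

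The key step is a capacity-counting argument. First I would fix an arbitrary edge $e$ of the original graph and bound the number of symbol-carrying paths that traverse it. Because the paths $P_i^{(1)}, \dots, P_i^{(n)}$ of a single session are edge-disjoint, at most one path of session $i$ uses $e$; summing over the $n$ sessions, at most $n$ of the chosen paths pass through $e$. In the time-expanded graph each original edge is replaced by $T = n$ parallel unit-capacity edges, so $e$ has exactly enough capacity to accommodate every path that needs it. I would then assign, independently for each edge $e$, the (at most $n$) paths through $e$ injectively to its $n$ parallel copies.

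The point I expect to require the most care is arguing that these per-edge assignments need not be globally consistent along a path. I would emphasize that, under vector routing, each intermediate node is free to choose its own permutation (routing) matrix; hence a symbol entering a node on one parallel copy of an incoming edge can be forwarded on any parallel copy of the appropriate outgoing edge. Consequently the scheduling problem decouples: it suffices to choose, for each edge separately, an injective map from the paths crossing it to its $n$ time slots, and the node permutations then stitch these local choices into coherent end-to-end routes. Since each edge carries at most $n$ symbols and has capacity $n$, every such local assignment is realizable.

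Finally I would verify decodability: terminal $t_i$ lies at the heads of the $n$ edge-disjoint paths $P_i^{(1)}, \dots, P_i^{(n)}$, on which the symbols $X_i^{(1)}, \dots, X_i^{(n)}$ arrive uncombined, so $t_i$ recovers the full source vector $[X_i^{(1)} \dots X_i^{(n)}]$. This establishes a valid vector routing solution of length $T = n$, completing the proof. The argument is essentially a time-sharing bound in disguise — scaling each session's max-flow of $n$ down by a factor $1/n$ yields a feasible fractional multicommodity routing, and $T = n$ clears all the denominators — but the path/time-slot formulation makes the construction and the edge-capacity matching transparent.
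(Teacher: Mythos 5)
Your proof is correct, but it takes a genuinely different route from the paper's. The paper separates the \emph{sessions} by time slot globally: in the time-expanded graph it looks at the subgraph $G_\alpha$ formed by the $\alpha$-th copy of every edge, notes that $\max$-$\mathrm{flow}(s_\alpha - t_\alpha) = n$ inside $G_\alpha$ alone, and routes all $n$ symbols of session $\alpha$ entirely within that dedicated copy. Contention between sessions then never arises, so no counting or consistency argument is needed -- the proof is two lines. You instead separate sessions by \emph{paths} in the original graph (one symbol per edge-disjoint path), bound the load on each original edge by $n$ using edge-disjointness within a session, and then resolve time-slot contention by an injective per-edge assignment, invoking the nodes' freedom to permute slots to stitch local assignments into end-to-end routes. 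That stitching step is valid under the paper's definition of vector routing, and your decodability check goes through, so the argument is sound; it just does more work than necessary. Note that choosing the particular injective map ``the path of session $i$ through edge $e$ $\mapsto$ slot $i$'' on every edge collapses your construction exactly into the paper's. What your version buys is a sharper view of \emph{why} $T = n$ suffices: the binding constraint is the per-edge load (at most one path per session per edge), not the need for a fully dedicated network copy per session, and that observation is the one that would generalize to settings with unequal connectivities or demands where a clean session-per-slot partition is unavailable.
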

\begin{proof}
Under vector routing over $n$ time units, source $s_i$ observes
$[X_i^{(1)} ~ \dots ~ X_i^{(n)}]$ symbols. Each edge $e$ in the
original graph is replaced by $n$ parallel edges, $e^1, e^2,
\dots, e^n$. Let $G_{\alpha}$ represent the subgraph of this graph
consisting of edges with superscript $\alpha$. It is evident that
max-flow($s_{\alpha} - t_{\alpha}$) = $n$ over $G_{\alpha}$. Thus,
we transmit $X_{\alpha}^{(1)}, \dots, X_{\alpha}^{(n)}$ over $G_{\alpha}$
using routing, for all $\alpha=1,\dots,n$. It is clear that this strategy satisfies the
demands of all the terminals.
%
\end{proof}
Note that in general, a network with the above connectivity level may not be able to support a scalar routing solution, an instance is shown in Figure \ref{fig:2paths_frac_routing}. However, a scalar network coding solution exists for this example.


\begin{figure}[t]
\begin{center}
\includegraphics[width=55mm,clip=false, viewport=0 110 260 380]{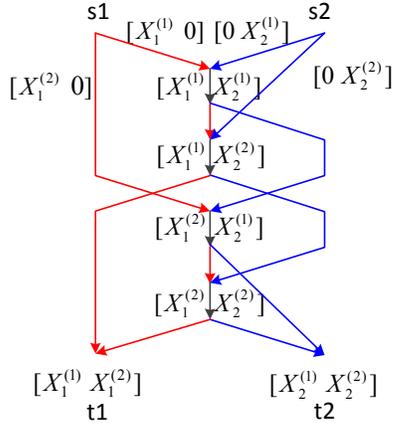}
\caption{A network with connectivity levels $[2~2]$ and rate $\{1,1\}$. There is a vector routing solution as shown in the figure.
There is no scalar routing solution.} \label{fig:2paths_frac_routing}\centering
\end{center}
\end{figure}


\section{Network coding for three unicast sessions}
\label{sec:netcod}
In the case of three unicast sessions, it is
clear based on the results of Section \ref{sec:vector_routing}
that if the connectivity level is $[3 ~3 ~3]$, then a vector
routing solution always exists. In this section we provide a full
characterization of the feasibility/infeasibility of supporting
three unicast sessions for a connectivity level of $[k_1 ~ k_2 ~
k_3]$, where $1 \leq k_i, \leq 3, i = 1, \dots, 3$. For the
feasible cases we will demonstrate appropriate linear network code
assignments. On the other hand, for the infeasible cases we will
present counter-examples where it is not possible to satisfy the
terminal's demands under any coding strategy.

\subsection{Infeasible Instances}
We begin by demonstrating certain instances that can be ruled out by using cutset bounds.
\begin{lemma}
\label{lemma:meagerness_cases}
There exist multiple unicast instances with three unicast sessions such that the connectivity levels $[2~2~2]$ and $[1~1~3]$ are infeasible.
\end{lemma}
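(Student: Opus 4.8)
The plan is to construct explicit counter-examples for each of the two connectivity levels and argue infeasibility via cutset (information-theoretic) bounds that hold against all coding strategies.

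The plan is to exhibit an explicit graph for each of the two connectivity levels and to establish infeasibility by a cut-set (entropy) argument that is valid against arbitrary, including nonlinear, coding schemes. The tool I would use throughout is the following observation, which relies on the graph being acyclic: if $U \subseteq V$ is a vertex set with $s_i \in U$ and $t_i \in U^c$ for every $i$ in some index set $S$, and $C$ denotes the set of edges directed from $U$ to $U^c$, then after a topological ordering every symbol produced in $U^c$ is a deterministic function of the symbols $Y_C$ carried on $C$ together with the sources located in $U^c$. Since the sources are mutually independent and unit-entropy, and each $t_i$ with $i \in S$ must reproduce $X_i$, decodability together with independence gives $\sum_{i \in S} R_i = H(\{X_i : i \in S\}) \le H(Y_C) \le |C|\log q$, i.e. $\sum_{i \in S} R_i \le |C|$. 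A counter-example is then any instance whose connectivity vector is the desired one but for which some cut violates this inequality.

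For $[2~2~2]$ I would build a graph with a single shared two-edge bottleneck. Introduce two internal edges $e_1 = (a_1,b_1)$ and $e_2 = (a_2,b_2)$, connect each source $s_1,s_2,s_3$ to both $a_1$ and $a_2$, and connect both $b_1$ and $b_2$ to each terminal $t_1,t_2,t_3$. Each pair $s_i-t_i$ then has exactly two edge-disjoint paths, one through $e_1$ and one through $e_2$, so the connectivity level is $[2~2~2]$; and because $a_1,a_2$ have no outgoing edges other than $e_1,e_2$, removing $C = \{e_1,e_2\}$ disconnects all three pairs. Taking $S = \{1,2,3\}$ in the bound above gives $R_1+R_2+R_3 = 3 \le |C| = 2$, a contradiction, so no coding strategy can serve all three terminals.

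For $[1~1~3]$ I would make sessions $1$ and $2$ collide on a single unit-capacity edge while leaving session $3$ richly connected. Concretely, route every $s_1-t_1$ path and every $s_2-t_2$ path through one common edge $e$, so that $s_1,s_2$ lie on the tail side of $e$ and $t_1,t_2$ on its head side, and attach a separate gadget giving $s_3-t_3$ three edge-disjoint paths. The max-flows are then $1$, $1$, and $3$ respectively, realizing the vector $[1~1~3]$. Applying the cut-set bound with $C = \{e\}$ and $S = \{1,2\}$ gives $R_1 + R_2 = 2 \le 1$, which is impossible; hence $t_1$ and $t_2$ cannot both decode.

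The steps that need the most care are the verifications that the constructed max-flows equal exactly the claimed values, in particular that the bottleneck edges form genuine minimum cuts of the stated pairs so that the connectivity is not accidentally larger, and the justification that the entropy inequality holds under nonlinear codes. The latter is the crux, since the achievability half of the paper is linear but the impossibility claims must hold for all strategies; the acyclicity of $G$ is exactly what licenses the deterministic-function argument underlying it. Everything else reduces to reading off a cut of the explicit graph, and I expect no analytic difficulty beyond bookkeeping of the incidence structure.
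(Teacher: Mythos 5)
Your proposal is correct and follows essentially the same route as the paper: for each connectivity vector you exhibit a graph with a shared bottleneck (two edges for $[2~2~2]$, one edge for $[1~1~3]$) and rule out all coding schemes by a cut-set bound, which is exactly what the paper does with its Figure~2 examples and the cuts $\{s_1,s_2,s_3,v_1,v_2\}$ and $\{s_1,s_2,v_1\}$. Your explicit justification that the cut-set bound holds against nonlinear codes (the deterministic-function argument via topological order, combined with source independence) is actually spelled out in more detail than in the paper, which simply invokes the bound.
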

\begin{proof}
A network with connectivity levels $[2~2~2]$ is shown in Figure \ref{fig:3sources2paths}. Consider the cut specified by the set of nodes $\{s_1,s_2,s_3,v_1,v_2\}$ that has a capacity value of 2. 
The rate that needs to be supported over $\{e_1,e_2\}$ is 3. By the cut set bound, this rate cannot be achieved.

\begin{figure}[htbp]
\hspace{-0.1in}\center{
\subfigure[]{\label{fig:3sources2paths}
\includegraphics[width=33mm,clip=false, viewport=50 50 150 180]{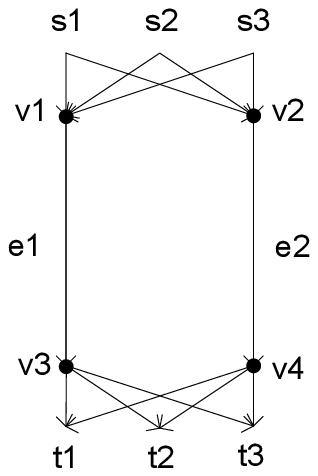}}
\hspace{0.4in}
\subfigure[]{\label{fig:counter311paths}
\includegraphics[width=40mm,clip=false, viewport=40 40 220 230]{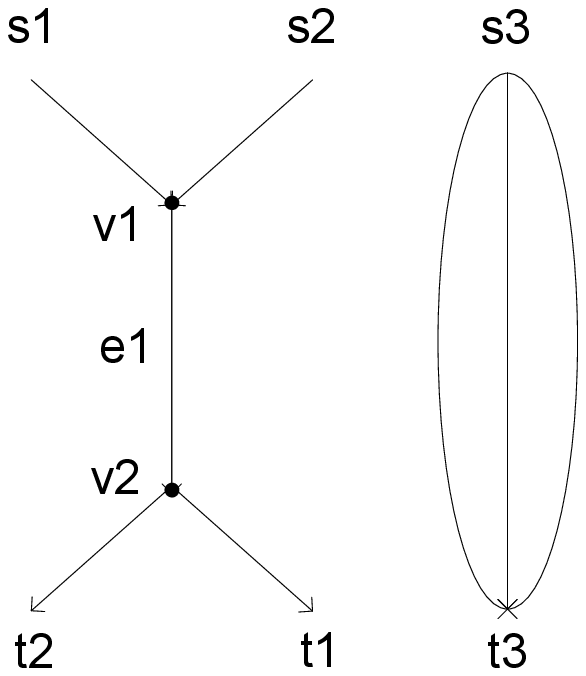}}
\caption{\label{fig: network_structure} (a) An example of $[2~2~2]$ connectivity network without a network coding solution.
(b) An example of $[1~1~3]$ connectivity network without a network coding solution.}}
\end{figure}

Similarly, a network with connectivity levels $[1~1~3]$ is shown in Figure \ref{fig:counter311paths}.
Consider the cut $\{s_1,s_2,v_1\}$. The capacity of this cut is 1. 
However, the rate that needs to be supported over $e_1$ is 2. Therefore, there does not exist a network coding solution.
\end{proof}
While cut set bound is useful in the above cases, there exist certain connectivity levels for which a cut set bound is not tight enough.
We now present such an instance in Figure \ref{fig:21counter}. We show that this instance is not feasible under any code scheme (linear or nonlinear). This instance was also presented in the work of Erez and Feder \cite{feder09}, though they did not provide a formal proof of this fact. 
\begin{lemma}
\label{lemma:21counter}
There exists a multiple unicast instance, with two sessions $<G,\{s_1-t_1, s_2-t_2\}, \{2, 1\}>$ and connectivity level $[2~3]$ that is infeasible.
\end{lemma}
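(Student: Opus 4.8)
The plan is to rule the instance out by an entropy argument rather than a cut-set count: as the preceding discussion indicates, in the network of Figure~\ref{fig:21counter} every cut separating $\{s_1,s_2\}$ from $\{t_1,t_2\}$ has capacity at least $R_1+R_2=3$, so the cut-set bound cannot certify infeasibility. I model the sources as independent random variables $X_1,X_2$ with $H(X_1)=2$ and $H(X_2)=1$ (in units of $\log q$), and I attach to each edge $e$ a variable $Y_e$ which, for a fixed code, is a function of the signals entering $tail(e)$ together with $X_i$ if $tail(e)=s_i$. Since the argument uses only functional dependence (which forces the corresponding conditional entropy to vanish), submodularity of the entropy, and the independence $X_1 \perp X_2$, the conclusion will hold for every code, linear or nonlinear. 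A feasible solution would in particular satisfy the two decoding constraints $H(X_1 \mid \mathrm{In}(t_1)) = 0$ and $H(X_2 \mid \mathrm{In}(t_2)) = 0$, where $\mathrm{In}(t_i)$ denotes the collection of signals on the edges entering $t_i$; the goal is to show these cannot hold at once.

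The first step exploits the tightness of session~$1$, whose demand $2$ equals its connectivity $2$. I fix a minimum edge cut $C_1$ of capacity $2$ separating $s_1$ from $t_1$ and whose source side contains \emph{both} sources $s_1$ and $s_2$. Because $G$ is acyclic and no source lies on the sink side of $C_1$, every signal downstream of $C_1$, and in particular everything reaching $t_1$, is a deterministic function of $Y_{C_1} := \{Y_e : e \in C_1\}$; decodability of session~$1$ then forces $H(X_1 \mid Y_{C_1}) = 0$. Since $H(Y_{C_1}) \le 2 = H(X_1)$, this yields $H(Y_{C_1}) = 2$ and $H(Y_{C_1} \mid X_1) = 0$, i.e. the two saturated edges carry a quantity that is a deterministic function of $X_1$ alone. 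As $X_1 \perp X_2$, I obtain the key relation $I(Y_{C_1}; X_2) = 0$: the bottleneck that session~$1$ fully occupies conveys nothing about $X_2$.

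The second step turns this into a contradiction at $t_2$, and it is the crux. The template is the chain $H(X_2) = H(X_2 \mid X_1) = I(X_2; \mathrm{In}(t_2) \mid X_1) \le H(\mathrm{In}(t_2) \mid X_1)$, where the first equality is independence and the middle step uses $H(X_2 \mid \mathrm{In}(t_2)) = 0$ to discard the remaining conditional term. It therefore suffices to show $H(\mathrm{In}(t_2) \mid X_1) < 1$, i.e. that once $X_1$ is known the signals entering $t_2$ retain less than one unit of entropy and so cannot determine $X_2$. What I must \emph{not} do is place all of $\mathrm{In}(t_2)$ downstream of the single cut $C_1$: that would make $C_1$ a cut between $s_2$ and $t_2$ and force the max-flow from $s_2$ to $t_2$ to be at most $2$, contradicting the connectivity level $3$. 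Instead the plan is to read off from Figure~\ref{fig:21counter} the specific edges through which $X_2$ can reach $t_2$, to pin them to $X_1$ by applying the relation $H(Y_C \mid X_1) = 0$ at \emph{more than one} tight session~$1$ bottleneck $C$, and to combine these via submodularity so that the aggregate $H(\mathrm{In}(t_2) \mid X_1)$ falls strictly below $H(X_2) = 1$.

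I expect essentially all of the difficulty to sit in this last step. A single saturated cut cannot starve a terminal of connectivity $3$, so the contradiction must come from the \emph{interaction} of several edges that session~$1$ is forced to commit to $X_1$; identifying exactly which edges into $t_2$ are thereby pinned, and bounding their joint entropy given $X_1$ below one unit, is precisely the combinatorial bookkeeping that a pure cut count misses. Carrying this out with only the generic entropy relations above is what makes the impossibility robust against arbitrary, including nonlinear, coding strategies.
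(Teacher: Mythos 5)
Your proposal correctly reproduces the first half of the argument, and in essentially the same way the paper does it: the two edges entering $t_1$ form a capacity-$2$ cut that must be saturated by session~$1$, so (up to the asymptotic $n\ep$ slack that the paper carries via Fano's inequality, and which you drop by assuming exact decoding) the signals on that cut are functions of $X_1$ alone and hence independent of $X_2$. This matches the paper's Equations (\ref{eq:fano2})--(\ref{eq:diff}). But the proof stops there. The entire second half --- showing that the three edges entering $t_2$ then cannot deliver a full unit of information about $X_2$ --- is left as a stated intention (``the plan is to read off from Figure~\ref{fig:21counter} the specific edges \dots and to combine these via submodularity''), and you yourself flag that ``essentially all of the difficulty'' sits there. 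That is precisely the case: what you have established is true of \emph{any} network in which $t_1$ has in-degree $2$, and by itself it cannot be a proof, since a terminal of connectivity $3$ cannot be starved by a single saturated cut (as you correctly observe). The contradiction has to come from the specific wiring of the counter-example, and that part is absent.

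Moreover, the plan you sketch for that missing half --- pin several session-$1$ bottlenecks to $X_1$ and combine by submodularity --- does not match how the paper actually closes the argument, and it is doubtful it can be made to work in that form. The paper's key intermediate step (Equation (\ref{eq:same})) is that the two edges $e_{21}$ and $e_{22}$ carry asymptotically the \emph{same} information, i.e. $H(Y^n_{21},Y^n_{22})\leq an+2n\ep$. This is derived not from additional tight $s_1$--$t_1$ cuts but from a chain of functional dependencies that involves the \emph{other} source: $Y_{22}$ is a function of $(X_3, Y_{21})$, $Y_{21}$ is a function of $(Y_{12}, Y_{20})$, and $Y_{20}$ is a function of $(X_1, X_2)$; combining these with the near-independence of $(Y_{12},Y_{22})$ from $X_3$ yields the bound. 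Only then does the paper assemble $H(X_3^n \mid Y^n_{11}, Y^n_{21}, Y^n_{22}) \geq an - 4n\ep$, using the capacity of $e_{11}$ to absorb the third input. So the missing step is not routine ``combinatorial bookkeeping'' over cuts pinned to $X_1$; it requires exploiting dependencies in which $X_2$ (the paper's $X_3$) itself appears, and your proposal contains no mechanism for that. As written, the proposal is an incomplete proof with the crux unproven.
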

\begin{proof}
The graph instance is shown in Figure \ref{fig:21counter}. Assume in $n$ time units, $s_1$ observes two independent vector sources $[X_1^{(1)} ~ \dots ~ X_1^{(n)}]$ and $[X_2^{(1)} ~ \dots ~ X_2^{(n)}]$, $s_2$ observes one independent vector source $[X_3^{(1)} ~ \dots ~ X_3^{(n)}]$. The sources are denoted as $X_1^n$, $X_2^n$ and $X_3^n$ for simplicity. The $n$ random variables that $e_i$ carries are denoted as $Y_{e_i}^n$, or simply $Y_i^n$.
Suppose that the alphabet of $X_i$ is $\mathcal{X}$. Since the entropy rates for the three sources are the same,
we can assume $H(X_i)= \log|\mathcal{X}|=a$. Also, since we are interested in
the feasibility of the solution, we can further assume that the alphabet size of $Y_{ij}$
is also the same as $\mathcal{X}$, and $H(Y_{ij})\leq\log|\mathcal{X}|=a$ by the capacity constraint of the edge.
At terminal $t_1$ and $t_2$, from $Y^n_{11}$, $Y^n_{12}$, $Y^n_{21}$ and $Y^n_{22}$,
we estimate $X^n_1$, $X^n_2$ and $X^n_3$.
Let the estimate be $\widehat{X}^n_1$, $\widehat{X}^n_2$ and $\widehat{X}^n_3$.
Suppose that there exist network codes and decoding function such that
$P((\widehat{X}^n_1,\widehat{X}^n_2)\neq (X_1^n,X_2^n))\rightarrow 0$ as $n\rightarrow \infty$.
\begin{figure}[t]
\begin{center}
\includegraphics[width=40mm,clip=false, viewport=0 0 150 215]{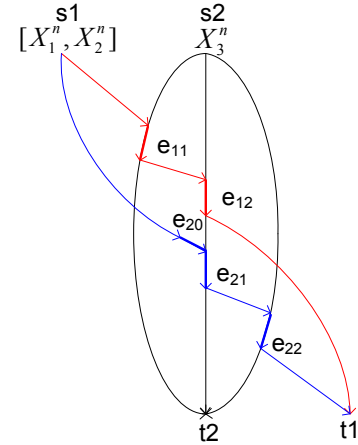}
\caption{An example of $[2~3]$ connectivity network, rate $\{2,1\}$
cannot be supported.} \label{fig:21counter}\centering
\end{center}
\end{figure}
From the Fano's inequality, we shall have
\begin{equation}
\label{eq:fano}
 H(X_1^n,X_2^n|\widehat{X}^n_1,\widehat{X}^n_2)\leq n\ep.
\end{equation}
where $n\ep=1+nP_e\log(|\mathcal X|)$. For $t_1$ to decode $X_1^n$
and $X_2^n$ asymptotically, $\ep\rightarrow0$ as $P_e\rightarrow0$,
when $n\rightarrow\infty$, where $P_e=P((\widehat{X}^n_1,\widehat{X}^n_2)\neq (X_1^n,X_2^n))$.

Likewise, decodability at $t_1$ implies that $\widehat{X}^n_1,\widehat{X}^n_2$ are functions of $Y^n_{12}$ and $Y^n_{22}$. Hence, we will have
\begin{equation}
\label{eq:fano2}
\begin{split}
H(X_1^n,X_2^n|Y^n_{12},Y^n_{22})&=H(X_1^n,X_2^n|\widehat{X}^n_1,\widehat{X}^n_2,Y^n_{12},Y^n_{22})\\
&\leq H(X_1^n,X_2^n|\widehat{X}^n_1,\widehat{X}^n_2)\leq n\ep.
\end{split}
\end{equation}
Now the sequences of information coming into $t_1$ are,
\begin{equation}
\label{eq:begin}
\begin{split}
2an&\stackrel{_{(a)}}{\geq} H(Y^n_{12},Y^n_{22})\\
&\stackrel{_{(b)}}{=}H(Y^n_{12},Y^n_{22},X_1^n,X_2^n)-H(X_1^n,X_2^n|Y^n_{12},Y^n_{22})\\
&\geq H(X_1^n,X_2^n)-H(X_1^n,X_2^n|Y^n_{12},Y^n_{22})\\
&\stackrel{_{(c)}}{\geq}2an-n\ep
\end{split}
\end{equation}
(a) is due to the capacity constraints of the edge $e_{12}$ and
$e_{22}$. (b) follows from the chain rule. (c) is because rate
$2an$ should be transmitted over $n$ time units and Equation (\ref{eq:fano2}).

Next, we shall have
\begin{equation}
\label{eq:funcofab}
\begin{split}
&H(Y^n_{12},Y^n_{22}|X_1^n,X_2^n)\\
&\stackrel{_{(a)}}{=}H(Y^n_{12},Y^n_{22},X_1^n,X_2^n)-H(X_1^n,X_2^n)\\
&\stackrel{_{(b)}}{=}H(X_1^n,X_2^n|Y^n_{12},Y^n_{22})+H(Y^n_{12},Y^n_{22})-H(X_1^n,X_2^n)\\
&\stackrel{_{(c)}}{\leq} n\ep+2an-2an=n\ep.
\end{split}
\end{equation}
(a)(b) follows from the chain rule. (c) is from Equation
(\ref{eq:fano2}) and Equation (\ref{eq:begin}).


Analyzing the independence of $X_1^n$, $X_2^n$, and $X_3^n$, we shall
have
\begin{equation}
\begin{split}
an&=H(X_3^n|X_1^n,X_2^n)\\
&\stackrel{_{(a)}}{=}H(X_3^n|Y^n_{12},Y^n_{22},X_1^n,X_2^n)+I(X_3^n;Y^n_{12},Y^n_{22}|X_1^n,X_2^n)\\
&=H(X_3^n|Y^n_{12},Y^n_{22},X_1^n,X_2^n)+H(Y^n_{12},Y^n_{22}|X_1^n,X_2^n)\\
&~~~~~~-H(Y^n_{12},Y^n_{22}|X_1^n,X_2^n,X_3^n)\\
&\stackrel{_{(b)}}{\leq} H(X_3^n|Y^n_{12},Y^n_{22},X_1^n,X_2^n)+n\ep\\
&\stackrel{_{(c)}}{\leq}
H(X_3^n|Y^n_{12},Y^n_{22})+n\ep\stackrel{_{(d)}}{\leq} an+n\ep
\end{split}
\end{equation}
(a) is from the definition of conditional mutual information. (b)
is from Equation (\ref{eq:funcofab}) and because conditioning
reduces entropy. (c) is because conditioning reduces entropy. (d)
is because conditioning reduces entropy. From the above
inequalities, the information on $e_{12}$ and $e_{22}$ cannot
decode $X_3^n$ asymptotically. Then we have the following equations,
\begin{equation}
\label{eq:cn}
an-n\ep\leq H(X_3^n|Y^n_{12},Y^n_{22})\leq an
\end{equation}
\begin{equation}
I(Y^n_{12},Y^n_{22};X_3^n)=H(X_3^n)-H(X_3^n|Y^n_{12},Y^n_{22})\leq n\ep;
\end{equation}
\begin{equation}
\label{eq:diff}
\begin{split}
H(Y^n_{12},Y^n_{22}|X_3^n)&=H(Y^n_{12},Y^n_{22})-I(Y^n_{12},Y^n_{22};X_3^n)\\
&\geq 2an-2n\ep
\end{split}
\end{equation}
\begin{equation*}
I(Y^n_{12};X_3^n)=I(Y^n_{12},Y^n_{22};X_3^n)-I(Y^n_{22};X_3^n|Y^n_{12})\leq
n\ep;
\end{equation*}
\begin{equation}
I(Y^n_{22};X_3^n)\leq n\ep
\end{equation}

The above inequalities imply that the information on $e_{12}$ and
$e_{22}$ are asymptotically independent of $X_3^n$.
%

Because $Y^n_{21}$ is only a function of $Y^n_{12}$ and
$Y^n_{20}$, 
\begin{equation}
\label{eq:same}
\begin{split}
&H(Y^n_{21},Y^n_{22})\\
&\stackrel{_{(a)}}{=}H(X_3^n,Y^n_{21},Y^n_{22})-H(X_3^n|Y^n_{21},Y^n_{22})\\
&\stackrel{_{(b)}}{=}H(X_3^n,Y^n_{21})-H(X_3^n|Y^n_{21},Y^n_{22})\\
&\stackrel{_{(c)}}{\leq}2an-H(X_3^n|Y^n_{21},Y^n_{22})\\
&\stackrel{_{(d)}}{\leq}2an-H(X_3^n|Y^n_{21},Y^n_{22},Y^n_{20},Y^n_{12},X_1^n,X_2^n)\\
&\stackrel{_{(e)}}{=}2an-H(X_3^n|Y^n_{22},Y^n_{20},Y^n_{12},X_1^n,X_2^n)\\
&\stackrel{_{(f)}}{=}2an-H(X_3^n|Y^n_{22},X_1^n,X_2^n,Y^n_{12})\\
&\stackrel{_{(g)}}{=}2an-H(X_3^n|Y^n_{22},Y^n_{12})+I(X_3^n;X_1^n,X_2^n|Y^n_{22},Y^n_{12})\\
&\stackrel{_{(h)}}{=}2an-H(X_3^n|Y^n_{22},Y^n_{12})+H(X_1^n,X_2^n|Y^n_{22},Y^n_{12})\\
&~~~~~~-H(X_1^n,X_2^n|Y^n_{22},X_3^n,Y^n_{12})\\
&\leq 2an-H(X_3^n|Y^n_{22},Y^n_{12})+H(X_1^n,X_2^n|Y^n_{22},Y^n_{12})\\
&\stackrel{_{(i)}}{\leq}2an-an+n\ep+n\ep=an+2n\ep
\end{split}
\end{equation}
(a) follows from the chain rule, (b) is because $Y^n_{22}$ is a
function of $X_3^n$ and $Y^n_{21}$. (c) is because of the capacity
constraints. (d) is because conditioning reduces entropy. (e) is
because  $Y^n_{21}$ is a function of $Y^n_{12}$ and $Y^n_{20}$.
(f) is because $Y^n_{20}$ is a function of $X_1^n$ and $X_2^n$. (g)(h)
follows from the mutual information definition. (i) is from
Equation (\ref{eq:fano2}) and Equation (\ref{eq:cn}). The above
inequalities indicate that $e_{21}$ and $e_{22}$ should carry the
same information asymptotically.

From the network, we know that $Y^n_{12}$ is a function of
$Y^n_{11}$ and $X_3^n$. Then
\begin{equation}
\label{eq:indep}
\begin{split}
H(Y^n_{11},&Y^n_{21},Y^n_{22}|X_3^n)=H(Y^n_{11},Y^n_{21},Y^n_{22},X_3^n|X_3^n)\\
&\geq H(Y^n_{12},Y^n_{21},Y^n_{22}|X_3^n)\\
&\geq H(Y^n_{22},Y^n_{12}|X_3^n)\stackrel{_{(a)}}{\geq} 2an-2n\ep
\end{split}
\end{equation}
(a) is due to Equation (\ref{eq:diff}).

Finally, we shall have
\begin{equation}
\begin{split}
&H(X_3^n|Y^n_{11},Y^n_{21},Y^n_{22})\\
&=H(Y^n_{11},Y^n_{21},Y^n_{22}|X_3^n)+H(X_3^n)-H(Y^n_{22},Y^n_{21},Y^n_{11})\\
&\stackrel{_{(a)}}{\geq}2an-2n\ep+an-H(Y^n_{22},Y^n_{21},Y^n_{11})\\
&=3an-2n\ep-H(Y^n_{22},Y^n_{21})-H(Y^n_{11}|Y^n_{22},Y^n_{21})\\
&\stackrel{_{(b)}}{\geq}3an-2n\ep-an-2n\ep-H(Y^n_{11}|Y^n_{22},Y^n_{21})\\
&\stackrel{_{(c)}}{\geq}2an-4n\ep-an=an-4n\ep
\end{split}
\end{equation}
(a) is because of Equation (\ref{eq:indep}). (b) is because of
Equation (\ref{eq:same}). (c) is due to the capacity constraint of
$Y^n_{11}$.

When $n\rightarrow \infty$, for $t_1$ to asymptotically decode
$X_1^n$ and $X_2^n$, we shall have $\ep\rightarrow 0$. Then $t_2$
cannot decode $X_3^n$ asymptotically.
\end{proof}
\begin{corollary}
There exists a multiple unicast instance with three sessions, and connectivity level $[2~3~2]$ that is infeasible.
\end{corollary}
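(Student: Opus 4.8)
The plan is to obtain the corollary as a direct reduction from Lemma~\ref{lemma:21counter}, by splitting the rate-$2$ source of that infeasible two-session instance into two collocated unit-rate sources. Recall from Section~\ref{sec:pre} that a source of integer entropy $r$ can be cast in the unit-entropy framework as $r$ collocated unit-entropy sources whose messages are delivered to $r$ collocated unit-entropy terminals. I would apply exactly this device to the instance $<G,\{s_1-t_1,s_2-t_2\},\{2,1\}>$ of Figure~\ref{fig:21counter}, in which $s_1$ observes the two independent messages $X_1,X_2$ (destined for $t_1$) and $s_2$ observes $X_3$ (destined for $t_2$).

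Concretely, I would keep the graph $G$ unchanged and reinterpret the rate-$2$ source $s_1$ as two collocated unit-entropy sources, one emitting $X_1$ and one emitting $X_2$, both residing at the original node $s_1$; likewise I would view $t_1$ as two collocated unit-entropy terminals, one demanding $X_1$ and one demanding $X_2$. Labeling the pair carrying $X_1$ as session $1$, the original pair $s_2-t_2$ (carrying $X_3$) as session $2$, and the pair carrying $X_2$ as session $3$ yields a three-session, all-unit-rate instance $<G',\{s_i-t_i\}_1^3>$.

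Next I would verify that the connectivity level of $G'$ is exactly $[2~3~2]$. Since the sources $s_1,s_3$ coincide with the original node $s_1$ and the terminals $t_1,t_3$ coincide with the original node $t_1$, each individual max-flow satisfies $\text{max-flow}(s_1-t_1)=\text{max-flow}(s_3-t_3)=\text{max-flow}(s_1-t_1 \text{ in } G)=2$. The connectivity-level definition constrains each pair only individually, so it is harmless that sessions $1$ and $3$ would compete for the same two edge-disjoint paths. The middle session inherits $\text{max-flow}(s_2-t_2)=3$ unchanged, giving the vector $[2~3~2]$.

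Finally I would argue infeasibility by contradiction: any network coding solution (linear or nonlinear) for $<G',\{s_i-t_i\}_1^3>$ would let the collocated terminal node recover both $X_1$ and $X_2$ while $t_2$ recovers $X_3$, which is precisely a solution of the original rate-$\{2,1\}$ two-session instance and contradicts Lemma~\ref{lemma:21counter}. I expect the only delicate point to be the connectivity bookkeeping of the previous paragraph, namely confirming that the split leaves each individual max-flow at the stated value and that $G'$ remains a legitimate instance (acyclic, unit-capacity, sources without incoming edges); once this is checked, the result follows immediately from the reduction.
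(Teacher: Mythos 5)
Your proposal is correct and is essentially identical to the paper's own proof: the paper likewise keeps the graph of Figure~\ref{fig:21counter}, collocates two unit-rate sources $s'_1,s'_3$ at $s_1$ and the corresponding terminals $t'_1,t'_3$ at $t_1$, places the third pair at $s_2,t_2$, notes the resulting connectivity level $[2~3~2]$, and invokes Lemma~\ref{lemma:21counter} for infeasibility. Your additional care in checking the max-flow values and stating the contradiction explicitly only fills in details the paper leaves implicit.
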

\begin{proof}
Consider a multiple unicast instance $<G,\{s'_i-t'_i\}^3_1,\{1,1,1\}>$, where $G$ is the graph in Figure \ref{fig:21counter}.
The sources $s'_1$ and $s_3'$ are collocated at $s_1$ (in $G$), and the terminals $t_1'$ and $t_3'$ are collocated
at $t_1$ (in $G$). Likewise, the source $s_2'$ and terminal $t_2'$ are located at $s_2$ and $t_2$ in $G$.
The three sessions have connectivity level $[2~3~2]$. Based on the arguments in Lemma \ref{lemma:21counter},
there is no feasible solution for this instance.
\end{proof}

The instance presented in Lemma \ref{lemma:21counter}, can be generalized to obtain a series of counter-examples.
In particular, we have the following theorem shows an instance with two unicast sessions with connectivity level $[n_1~n_2]$ that cannot support rates $R_1 = n_1, R_2 = n_2 - n_1$.
\begin{theorem}
\label{lemma:counterforn} For a directed acyclic graph $G$ with
two $s-t$ pairs, if the connectivity level for $(s_1,t_1)$ is
$n_1$, for $(s_2,t_2)$ is $n_2$, $1<n_1<n_2$, there exist instances
that cannot support $R_1=n_1$ and $R_2=n_2-n_1$.
\end{theorem}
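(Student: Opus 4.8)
The plan is to generalize the gadget of Figure \ref{fig:21counter} to a family of graphs $G_{n_1,n_2}$ and then replay the entropy argument of Lemma \ref{lemma:21counter} at this larger scale. First I would build $G_{n_1,n_2}$ so that $s_1$ carries $n_1$ unit-entropy sources $X_1,\dots,X_{n_1}$ and reaches $t_1$ through exactly $n_1$ edge-disjoint paths, while $s_2$ carries $n_2-n_1$ unit sources and reaches $t_2$ through exactly $n_2$ edge-disjoint paths; this realizes the connectivity level $[n_1~n_2]$ and the target rates $R_1=n_1$, $R_2=n_2-n_1$. The design principle inherited from the base case is that $n_1$ of the $n_2$ paths feeding $t_2$ are forced through the same cut edges $\{e_1,\dots,e_{n_1}\}$ that deliver session $1$ to $t_1$, and the analogues of the mixing nodes producing $Y^n_{20},Y^n_{21},Y^n_{22}$ are scaled so that these shared edges are simultaneously compelled to be, asymptotically, a function of $(X_1^n,\dots,X_{n_1}^n)$ alone.

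For the core argument I would proceed as follows. (1) Invoke decodability at $t_1$ together with Fano's inequality to get $H(X_1^n,\dots,X_{n_1}^n\mid \text{edges into }t_1)\le n\ep$, exactly as in (\ref{eq:fano2}). (2) Combine the capacity bound $n_1 a n$ on the $n_1$ edges into $t_1$ with $H(X_1^n,\dots,X_{n_1}^n)=n_1 a n$ to conclude, as in (\ref{eq:begin})--(\ref{eq:funcofab}), that those edges carry essentially all of session $1$ and nothing else. (3) Use independence of $s_2$'s sources from session $1$ to show, as in (\ref{eq:cn})--(\ref{eq:diff}), that the shared cut edges are asymptotically independent of the $n_2-n_1$ sources of $s_2$. (4) Replay the ``same-information'' computation (\ref{eq:same}) to show the shared edges on session $2$'s side collapse in information content. (5) Assemble the analogue of the final display to obtain $H(X_{n_1+1}^n,\dots,X_{n_2}^n\mid \text{edges into }t_2)\ge (n_2-n_1)an - c\,n\ep$ for a fixed constant $c$, so that $t_2$ cannot decode as $n\to\infty$ and $\ep\to0$.

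A tempting shortcut is a restriction argument that cites Lemma \ref{lemma:21counter} as a black box: reveal all but two of $s_1$'s sources and all but one of $s_2$'s sources to every node as genie side information, hoping $G_{n_1,n_2}$ collapses onto an embedded copy of Figure \ref{fig:21counter} at rates $(2,1)$. I expect this to be exactly where the difficulty lies, and I would be cautious with it. The subtlety is that in the base instance \emph{none} of $s_2$'s three paths is a clean direct path --- if it were, session $2$ could simply route on it and session $1$ could use its own $n_1$-cut, making the instance feasible. Consequently the $n_2-n_1$ units of ``excess'' connectivity of session $2$ in $G_{n_1,n_2}$ must be genuinely entangled with session $1$'s $n_1$ paths rather than realized as independent direct $s_2$--$t_2$ edges, so one cannot simply delete paths after conditioning. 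The main obstacle is therefore the construction itself: choosing the mixing subnetwork (the scaled-up analogue of $Y^n_{21}=f(Y^n_{12},Y^n_{20})$ and $Y^n_{22}=f(X_3^n,Y^n_{21})$) so that session $2$ attains max-flow $n_2$ while the genie restriction lands precisely on the base obstruction, and verifying that the restricted network is isomorphic to Figure \ref{fig:21counter} so the citation is legitimate --- failing which I would fall back on carrying out steps (1)--(5) directly.
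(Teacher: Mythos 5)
Your overall direction --- generalizing the gadget of Figure \ref{fig:21counter} and replaying the Fano/entropy argument of Lemma \ref{lemma:21counter} --- is exactly the route the paper indicates (the theorem is introduced by saying the instance of Lemma \ref{lemma:21counter} ``can be generalized to obtain a series of counter-examples''); the paper itself omits the proof for space, so there are no further details to compare against. But as it stands your proposal has a genuine gap: the counter-example graph $G_{n_1,n_2}$ is never actually constructed. Everything in your steps (1)--(5) is conditional on a topology you describe only by desiderata (``$n_1$ of the $n_2$ paths feeding $t_2$ are forced through the same cut edges,'' ``the mixing nodes \dots are scaled so that \dots''). The entropy argument of Lemma \ref{lemma:21counter} is not topology-agnostic: step (b) of Equation (\ref{eq:same}) uses that $Y^n_{22}$ is a function of $X_3^n$ and $Y^n_{21}$, step (e) uses $Y^n_{21}=f(Y^n_{12},Y^n_{20})$, step (f) uses $Y^n_{20}=f(X_1^n,X_2^n)$, and the final display uses the unit capacity of the single edge carrying $Y^n_{11}$. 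Each of these functional dependencies is a property of the specific figure, so ``replaying'' them requires exhibiting a graph in which the analogous dependencies provably hold and in which max-flow$(s_1,t_1)=n_1$ and max-flow$(s_2,t_2)=n_2$ exactly. Until that graph is written down, the proof has not started; you concede as much when you call the construction ``the main obstacle.''

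A second, smaller issue: even granting a construction, your step (4) is not a routine replay. Equation (\ref{eq:same}) is a statement about two single edges ($e_{21}$ and $e_{22}$) carrying asymptotically the same information; in the generalized instance the corresponding claim concerns two \emph{sets} of edges (the $n_1$ shared cut edges versus the remaining edges into $t_2$), and the chain (a)--(i) --- in particular the capacity bookkeeping in steps (c) and (i), where the constants $2an$ and $an$ appear --- must be redone with the correct multiplicities $n_1$, $n_2-n_1$, $n_2$. The constant $c$ in your target bound $(n_2-n_1)an-c\,n\ep$ depends on exactly this bookkeeping, and you give no argument that the slack terms still telescope to something negligible relative to $(n_2-n_1)an$. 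Your fallback (``carry out steps (1)--(5) directly'') is the right plan, but it is a plan, not a proof; a complete argument needs the explicit family of graphs together with the re-derived inequalities.
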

\begin{proof}
The proof is omitted due to space limitations.
\end{proof}

\subsection{Feasible Instances}
It is evident that the infeasibility of an instance with connectivity level $[2~2~3]$ implies that when $1\leq k_i\leq 3$, the only possible instances that are potentially feasible are $[1~3~3]$, its permutations and connectivity levels that are greater than it. We now show that many of these instances are feasible using linear network codes. In this subsection, we present efficient linear network code assignment algorithms for these cases. Towards this end, we need the following definitions.

\begin{definition} {\it Minimality.} Consider a multiple unicast instance $<G = (V,E), \{s_i - t_i\}_{1}^{n}>$, with connectivity level $[k_1~k_2~\dots~ k_n]$. The graph $G$ is said to be minimal if the removal of any edge from $E$ strictly reduces the connectivity level. If $G$ is minimal, we will also refer to the multiple unicast instance as minimal.
\end{definition}
Clearly, given a non-minimal instance $G = (V,E)$, we can always remove the non-essential edges from it, to obtain the minimal graph $G_{\min}$. This does not affect feasibility, since a network code for $G_{\min} = (V, E_{\min})$ can be converted into a network code for $G$ by simply assigning the all-zeros coding vector to the edges in $E \backslash E_{\min}$.
\begin{definition}
{\it Overlap edge.} An edge $e$ is said to be an overlap edge for paths $P_i$ and $P_j$ in $G$, if $e \in P_i \cap P_j$. 
\end{definition}
\begin{definition} {\it Overlap segment.} In $G$, consider an ordered set of edges $E_{os} = \{e_1, \dots, e_l\} \subset E$ that forms a path. This path is called an overlap segment for paths $P_i$ and $P_j$ if
\begin{itemize}
\item[(i)] $\forall k \in\{1, \dots, l\}$, $e_k$ is an overlap edge for $P_i$ and $P_j$.
\item[(ii)] None of the incoming edges into tail($e_1$) are overlap edges for $P_i$ and $P_j$.
\item[(iii)] None of the outgoing edges leaving head($e_l$) are overlap edges for $P_i$ and $P_j$.
\end{itemize}
\end{definition}
Our solution strategy is as follows. We first convert the original instance into another {\it structured} instance where each internal node has at most degree three (in-degree + out-degree). We then convert this new instance into a minimal one, and then develop the code assignment algorithm. It will be evident that using this network code, one can obtain a network code for the original instance.

\subsubsection{Conversion procedure}
Let $G=(V,E)$ be our original graph, and
let $s_i$ and $t_i$ be the given sources and terminals.
We can efficiently construct a {\em structured} graph $\hat{G}=(\hat{V},\hat{E})$ in which each internal node $v \in \hV$ is of total degree at most three with the additional following properties:
(a) $\hG$ is acyclic.
(b) For every source (terminal) in $G$ there is a corresponding source (terminal) in $\hG$.
(c) For any two edge disjoint paths $P_i$ and $P_j$ for one unicast session in $G$, there exist two {\em vertex} disjoint paths in $\hG$ for the corresponding session in $\hG$.
(d) Any feasible network coding solution in $\hG$ can be efficiently turned into a feasible network coding solution in $G$.
Our reduction steps are the same as in \cite{LSB06}. Due to space limitations, refer to \cite{LSB06} and \cite{ramamoorthy-langberg} for details.

\subsubsection{Code Assignment Procedure}
In the discussion below, we will assume that the graph $G$ is structured. It is clear that this is without loss of generality based on the previous arguments. In our arguments, we will use the minimality of the graph extensively.
\begin{lemma}
Consider a minimal multiple unicast instance, $<G,\{s_1 - t_1, s_2 - t_2\}>$ with connectivity level $[1~m]$. Denote the $s_1 - t_1$ path by $P_1$ and the set of edge disjoint $s_2 - t_2$ paths by $\{P_{21}, \dots, P_{2m}\}$. There can be at most one overlap segment between $P_1$ and each $P_{2i}, i=1, \dots, m$.
\end{lemma}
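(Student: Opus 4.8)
The plan is to argue by contradiction, exploiting both the minimality of the instance and the structural properties of the (structured) graph, namely that every internal node has total degree at most three and that the $m$ paths $P_{21},\dots,P_{2m}$ for session $2$ are edge-disjoint. Suppose that for some fixed $i$ there are two distinct overlap segments $S$ and $S'$ between $P_1$ and $P_{2i}$; I will produce an edge whose deletion leaves the connectivity level $[1~m]$ intact, contradicting minimality.

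First I would pin down the cyclic structure. Along $P_1$ the two segments occur in some order, say $S$ before $S'$. I claim they must occur in the \emph{same} order along $P_{2i}$: if instead $S'$ preceded $S$ on $P_{2i}$, then concatenating the sub-path of $P_1$ from $head(S)$ to $tail(S')$, the segment $S'$, the sub-path of $P_{2i}$ from $head(S')$ to $tail(S)$, and the segment $S$ would form a closed directed walk, contradicting that $G$ is acyclic. Hence both paths traverse $S$ and then $S'$, and in particular they reconverge at $c = tail(S')$.

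Next I would zoom in on the node $b = head(S)$, where the paths first separate. Both $P_1$ and $P_{2i}$ enter $b$ through the last edge of $S$, so they share this incoming edge; by condition (iii) in the definition of an overlap segment the edges by which $P_1$ and $P_{2i}$ leave $b$ are not overlap edges, hence they are two \emph{distinct} outgoing edges $f$ (on $P_1$) and $f'$ (on $P_{2i}$). Thus $b$ has in-degree at least $1$ and out-degree at least $2$; since its total degree is at most three, $b$ has in-degree \emph{exactly} one, and that single incoming edge is already carried by $P_{2i}$. Because the session-$2$ paths are edge-disjoint, no path $P_{2j}$ with $j\neq i$ can pass through $b$, while $P_{2i}$ itself leaves $b$ along $f'\neq f$; therefore the edge $f$ is used by \emph{no} session-$2$ path.

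Finally I would show $f$ is also inessential for session $1$. I reroute $P_1$ as $P_1' = P_1[s_1\to b]\cdot P_{2i}[b\to c]\cdot P_1[c\to t_1]$, which is a legitimate $s_1$–$t_1$ connection (the sub-path $P_{2i}[b\to c]$ exists by the same-order conclusion, and any walk in a DAG contains an $s_1$–$t_1$ path). This walk avoids $f$, because $P_{2i}[b\to c]$ and $P_1[b\to c]$ are edge-disjoint and $f$ lies on the latter, while $P_1[s_1\to b]$ and $P_1[c\to t_1]$ are edge-disjoint from $f$ since $P_1$ is simple. Hence $\text{max-flow}(s_1 - t_1)$ is unchanged by deleting $f$, and we have just shown the same for session $2$, so removing $f$ does not lower the connectivity level $[1~m]$ — contradicting minimality and forcing at most one overlap segment. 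The step I expect to require the most care is the degree bookkeeping at $b$: one must combine condition (iii) of the overlap-segment definition (to guarantee $f\neq f'$, hence out-degree at least $2$) with the degree-three bound (to force in-degree exactly one) and with edge-disjointness of the session-$2$ paths, and then verify that the rerouted $P_1'$ genuinely avoids $f$.
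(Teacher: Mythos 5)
Your proof is correct and follows essentially the same route as the paper's: both argue by contradiction with minimality, identify the outgoing edge of $P_1$ at the head of the first overlap segment (your $f$, the paper's $e^*$) as the removable edge, and restore $s_1$--$t_1$ connectivity by rerouting $P_1$ along $P_{2i}$ between the two segments. The only differences are points of added rigor on your part: you justify via acyclicity that the two segments occur in the same order on both paths (which the paper leaves implicit), and you derive that no other session-$2$ path passes through $b$ from edge-disjointness plus the degree bound, where the paper directly invokes vertex-disjointness of the session-$2$ paths in the structured graph.
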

\begin{proof}
Suppose that there are two overlap segments $E_{os1} = \{e_{1}, \dots, e_{k_1}\}$ and $E_{os2} = \{e'_{1}, \dots, e'_{k_2}\}$ between $P_1$ and $P_{2i}$, where $e_{k_1}$ is upstream of $e'_{1}$. Note that by the definition of an overlap segment and the fact that $G$ is structured, it holds that the head of $e_{k_1}$ has in-degree one and out-degree two, so that one outgoing edge from $head(e_{k_1})$ belongs to $P_1$ (denoted $e^*$) and the other belongs to $P_{2i}$. Note also $e^* \in P_1$ cannot belong to $P_{2j}, j \neq i$ since the set of paths $\{P_{21}, \dots, P_{2m}\}$ is vertex disjoint (since $G$ is structured).

Now, note that $e^*$ can be removed while still maintaining the required connectivity level. This is true for $s_2-t_2$, since $e^*$ does not lie on any of the paths $P_{21}, \dots, P_{2m}$. It is true for $s_1 - t_1$ since there is a path from $e_{k_1}$ to $e'_{k_2}$ that overlaps $P_{2i}$, and therefore this still continues be a path from $s_1 - t_1$. This path can be explicitly specified as $path(s_1, head(e_{k_1})),path(e_{k_1}, e'_{k_2}), path(head(e'_{k_2}), t_1)$.
\end{proof}
Using this property, we can obtain the following result that holds for the case of two unicast sessions with the rate $\{1,m\}$.
\begin{lemma}
\label{lemma:uneven_two_unicast}
A minimal multiple unicast instance $<G,\{s_1 - t_1, s_2 - t_2\}, \{1, m\}>$ with connectivity level $[1~m+1]$ is always feasible.
\end{lemma}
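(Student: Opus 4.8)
The plan is to exhibit an explicit scalar linear network code over the $m+1$ unit sources $X_1, X_2^{(1)}, \dots, X_2^{(m)}$ (splitting the rate-$m$ source as in the preliminaries). First I would fix the single $s_1$-$t_1$ path $P_1$ and a family of $m+1$ edge-disjoint $s_2$-$t_2$ paths $P_{21}, \dots, P_{2(m+1)}$. By the preceding lemma, $P_1$ meets each $P_{2i}$ in at most one overlap segment; since distinct session-2 paths are edge-disjoint, their overlap segments are disjoint sub-paths of $P_1$ and hence are linearly ordered along $P_1$. I would relabel the paths that actually overlap $P_1$ as $Q_1, \dots, Q_r$ in the order their segments are met while traversing $P_1$ from $s_1$ to $t_1$, and call the remaining $m+1-r$ paths $N_1, \dots, N_{m+1-r}$. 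Because $G$ is structured, the start of the $k$-th segment is a merge node of degree three (the $P_1$ edge and the $Q_k$ edge enter, one shared edge leaves) and its end is a split node (one edge enters, the $P_1$ edge and the $Q_k$ edge leave); moreover each $Q_k$ touches $P_1$ only in this one segment, so the parts of $Q_k$ before the merge and after the split meet no other path.

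I would then define the code by accumulation along $P_1$. Inject $U_0 = X_1$ at $s_1$; forward the signal unchanged on the free edges of $P_1$, and at the $k$-th merge let the shared segment carry $U_k = U_{k-1} + W_k$, where $W_k$ is the signal arriving on $Q_k$. The split then places this same $U_k$ on both the continuation of $P_1$ and the continuation of $Q_k$. At the source I would inject $W_k = X_2^{(k)}$ on $Q_k$ for $k = 1, \dots, r-1$, the correction $W_r = -\sum_{k=1}^{r-1} X_2^{(k)}$ on $Q_r$, and the remaining symbols $X_2^{(r)}, \dots, X_2^{(m)}$ one per path on $N_1, \dots, N_{m+1-r}$, routed cleanly to $t_2$. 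The purpose of the correction is that the total accumulated along $P_1$ is $U_r = X_1 + \sum_{k=1}^{r} W_k = X_1$, so the clean signal $X_1$ is exactly what $P_1$ carries into $t_1$.

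Decoding would then be immediate. The terminal $t_1$ has a single incoming edge carrying $U_r = X_1$ and recovers $X_1$. The terminal $t_2$ receives $U_k = X_1 + \sum_{i \le k} W_i$ along each $Q_k$ together with the clean symbols $X_2^{(r)}, \dots, X_2^{(m)}$ along the $N_j$; in particular $Q_r$ delivers $U_r = X_1$, which acts as a pilot. Subtracting it from each $U_k$ gives the partial sums $\sum_{i \le k} W_i$, and successive differences return $W_1, \dots, W_{r-1} = X_2^{(1)}, \dots, X_2^{(r-1)}$, so all of $X_2^{(1)}, \dots, X_2^{(m)}$ are recovered. The degenerate cases fit the same template: for $r = 0$ one simply routes $X_1$ on $P_1$ and one symbol on each of $m$ session-2 paths, and for $r = 1$ the correction forces $W_1 = 0$, so $P_1$ and $Q_1$ both carry $X_1$.

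The hard part will not be the algebra but the structural bookkeeping that makes the accumulation well defined. I would need the preceding lemma together with minimality and structuredness to guarantee that every overlapping path meets $P_1$ in exactly one segment through the clean merge/split gadget above, that these segments are linearly ordered, and that the pre- and post-overlap portions of each $Q_k$, as well as each $N_j$, deliver their assigned signals to the merge and on to $t_2$ without colliding with any other path. Once this disjointness is secured, the idea of accumulating the session-2 interference along $P_1$ and annihilating it with the last overlap is what simultaneously hands a clean $X_1$ to $t_1$ and leaves $t_2$ a pilot with which to strip the interference.
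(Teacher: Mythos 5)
Your proposal is correct and is essentially the paper's own argument: the paper proves the lemma by induction on $m$, but unrolling that induction yields exactly your direct construction --- clean routing on the paths that do not meet $P_1$, and on the overlapping paths the same accumulate-along-$P_1$ scheme in which the last overlapping path injects the annihilating combination so that $P_1$ delivers a clean $X_1$ to $t_1$ and $t_2$ strips the interference by successive differences. The only cosmetic differences are that you use a correction term $-\sum_k W_k$ over a general field where the paper works additively in characteristic $2$ (sending $\sum_j X_{2j}$ on the last path), and your direct formulation sidesteps the minor bookkeeping of verifying the induction hypothesis on the reduced subgraph.
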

\begin{proof}
We show that this can be achieved by using scalar linear network codes.  Let $P_1$ denote the path from $s_1 - t_1$ and $m+1$ vertex-disjoint paths from $s_2 - t_2$, as $P_{2j}, j = 1, \dots, m+1$. Let the source message at $s_1$ be denoted by $X_1$ and the source message vector at $s_2$ by $[X_{21}, \dots, X_{2m}]$. We proceed by induction on $m$.

\noindent {\it Base case - $m=1$.} In this case suppose that $P_1$ intersects at most one path from the $s_2 - t_2$. For instance if $P_1$ overlaps with $P_{21}$, then simply transmit $X_{21}$ over $P_{22}$ and $X_1$ over $P_1$.

Alternatively, $P_1$ overlaps both $P_{21}$ and $P_{22}$. Suppose that the segments are denoted $E_{os1}$ and $E_{os2}$ respectively and that $E_{os1}$ is upstream of $E_{os2}$ (w.l.o.g.). In this case, we flow $X_1$ ($X_{21}$) on $P_1$ ($P_{21}$) until $E_{os1}$ and flow $X_1 + X_{21}$ on $E_{os1}$, and further downstream on $P_{21}$ till $t_2$ and on $P_1$ until $E_{os2}$. We flow $X_{21}$ on $P_{22}$ until $E_{os2}$ and flow $X_1 + X_{21} + X_{21} = X_1$, on $E_{os2}$ and further downstream till $t_1$ and $t_2$. It is evident that $t_2$ can recover $X_{21}$ from its received values.

\noindent {\it Induction step.} Suppose that the induction hypothesis holds for $m=k$. For $m = k+1$, again we consider two cases.
Suppose that $P_1$ does not overlap with at least one path from the set $\{P_{21}, \dots, P_{2 k+1}\}$,
w.l.o.g. suppose that it is $P_{2 k+1}$. In this case the graph consisting of $P_1 \cup P_{21} \cup \dots \cup P_{2k}$, can be used to transmit $X_1$ to $t_1$ and $X_{21},\dots , X_{2k-1}$ to $t_2$ using the induction hypothesis. $X_{2k}$ can simply be routed on $P_{2 k+1}$.

On the other hand if $P_1$ overlaps with all the paths $P_{21}, \dots, P_{2 k+1}$. We assume w.l.o.g. that it overlaps first with $P_{21}$ (in $E_{os1}$), then with $P_{22}$ and so on until $P_{2 k+1}$. In this case, as illustrated in Figure \ref{fig:1segment}, we can arrive at the required solution. In particular, $s_2$ transmits $X_{2i}$ over paths $P_{2i}, i = 1, \dots, k$ and $\sum_{j=1}^k X_{2j}$ over $P_{2k+1}$ until the overlap point. The path $P_1$ carries $X_1$ until $E_{os1}$. At each overlap segment a sum of the incoming values into the segment is computed. This ensures that overlap segment $E_{osi}$ carries $X_1 + \sum_{j=1}^i X_{2j}, i = 1, \dots, k$ and $E_{osk+1}$ carries $X_1$. It can be seen that both $t_1$ and $t_2$ are satisfied in this case.
\begin{figure}[t]
\begin{center}
\includegraphics[width=40mm,clip=false, viewport=0 0 150 210]{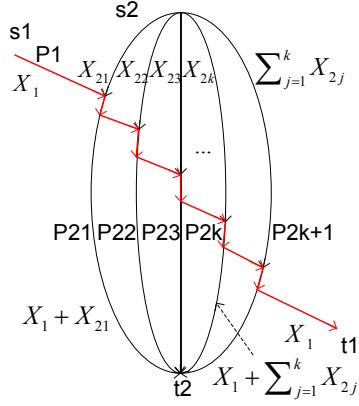}\vspace{-2mm}
\caption{An example where $P_1$ overlaps with all paths $P_{21}, \dots, P_{2 k+1}$. Rate $\{1,k\}$ is feasible.} \label{fig:1segment}\centering
\end{center}
\end{figure}
\end{proof}

It turns out that one can treat the case of three multiple unicast sessions with connectivity level $[1~3~3]$, by using the result of Lemma \ref{lemma:uneven_two_unicast}. The basic idea is to use vector linear network coding over two time units and code over pairs of sources at appropriately defined layers of the network. We state and prove this result below.

\begin{theorem}
\label{lemma:feasible}
A multiple unicast instance with three sessions such that the connectivity level is $[1 ~ 3 ~3]$ is always feasible.
\end{theorem}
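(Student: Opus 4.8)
The plan is to reduce the three-session instance to two applications of Lemma \ref{lemma:uneven_two_unicast}, exploiting the fact that the session with connectivity $1$ is the ``cheap'' one that can be paired separately with each of the two high-connectivity sessions. Since Lemma \ref{lemma:uneven_two_unicast} guarantees that a connectivity-$[1~3]$ instance supports the rate pair $\{1,2\}$ (take $m=2$), I would use vector linear network coding over $T=2$ time units, so that each source $s_i$ emits two symbols $X_i^{(1)},X_i^{(2)}$ and each unit-capacity edge is replaced by two parallel edges. Writing $G_1$ and $G_2$ for the two edge-disjoint copies of $G$ obtained this way, the idea is to devote $G_1$ to the pair $(s_1,t_1),(s_2,t_2)$ and $G_2$ to the pair $(s_1,t_1),(s_3,t_3)$.

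First I would restrict attention, inside $G_1$, to the sub-instance consisting only of sessions $1$ and $2$; this sub-instance still has connectivity level $[1~3]$, since the full copy $G_1$ preserves all of $G$'s max-flows. Passing to a minimal subgraph (which, as noted after the definition of minimality, does not affect feasibility, the omitted edges simply carrying the all-zeros coding vector) and invoking Lemma \ref{lemma:uneven_two_unicast} with $m=2$ yields a scalar linear code on $G_1$ that delivers one $s_1$-symbol to $t_1$ and two $s_2$-symbols to $t_2$. I would assign $X_1^{(1)}$ as the $s_1$-symbol and $(X_2^{(1)},X_2^{(2)})$ as the rate-$2$ source of $s_2$. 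Symmetrically, inside $G_2$ I would apply Lemma \ref{lemma:uneven_two_unicast} to sessions $1$ and $3$, delivering $X_1^{(2)}$ to $t_1$ and $(X_3^{(1)},X_3^{(2)})$ to $t_3$.

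Combining the two copies, $t_1$ receives $X_1^{(1)}$ from $G_1$ and $X_1^{(2)}$ from $G_2$ and hence recovers its entire length-two source; $t_2$ recovers $(X_2^{(1)},X_2^{(2)})$ wholly within $G_1$; and $t_3$ recovers $(X_3^{(1)},X_3^{(2)})$ wholly within $G_2$. Over the two time units each source has therefore delivered two symbols, so the achieved rate is $1$ per session, matching the demand. Because the two component codes live on the disjoint edge-sets of $G_1$ and $G_2$, they never interfere, and together they constitute a valid vector linear network coding solution of length $T=2$.

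I expect the only real subtlety to be bookkeeping rather than a genuine obstacle: one must verify that restricting a copy to two of the three sessions genuinely leaves connectivity $[1~3]$ (so that Lemma \ref{lemma:uneven_two_unicast} is applicable), and that the shared session $s_1$ can afford to spend one of its two time-slot symbols in each copy, which is consistent with its connectivity level being exactly $1$. The conceptual crux is simply the observation that the rate-$1$ session acts as a reusable ``helper'' that can be multiplexed in time against each high-rate session separately, thereby decoupling the three-way interference into two independent two-session problems already solved by Lemma \ref{lemma:uneven_two_unicast}.
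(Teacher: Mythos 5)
Your proposal is correct and follows essentially the same route as the paper's own proof: vector linear coding over $T=2$ time units, splitting the doubled graph into two edge-disjoint copies, and applying Lemma \ref{lemma:uneven_two_unicast} (with $m=2$) once to pair session $1$ with session $2$ and once to pair session $1$ with session $3$, so that $t_1$ collects one symbol from each copy. Your extra remark about first passing to a minimal subgraph before invoking the lemma is a small point of care the paper leaves implicit, but it does not change the argument.
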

\begin{proof}
Let the original graph (with unit capacity edges) be denoted by $G = (V,E)$. We use vector linear network coding over two time units, i.e. $T = 2$. In this case we form a new graph $G^*$ where each edge $e \in E$ is replaced by two parallel unit capacity edges $e^1$ and $e^2$ in $G^*$. The messages at source node $s_i$ are denoted $[X_{i1}~X_{i2}]$. Now consider the subgraph of $G^*$ induced by all edges with superscript $1$, that we denote $G^*_1$. In $G^*_1$, there exists a single $s_1 - t_1$ path and three edge disjoint $s_2 - t_2$ paths. Therefore, we can transmit $X_{11}$ from $s_1$ to $t_1$ and $[X_{21}~X_{22}]$ from $s_2 - t_2$ using the result of Lemma \ref{lemma:uneven_two_unicast}. Similarly, we use the subgraph of $G^*$ induced by all edges with superscript $2$, i.e., $G_2^*$ to communicate $X_{12}$ from $s_1$ to $t_1$ and $[X_{31}~X_{32}]$ from $s_3$ to $t_3$. Thus, using vector linear network coding over two time units, a connectivity level of $[1~3~3]$ suffices to satisfy the demands of each terminal.
\end{proof}

\begin{corollary}
A multiple unicast instance with three sessions such that the connectivity level is greater than $[1~3~3]$
is always feasible.
\end{corollary}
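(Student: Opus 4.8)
The plan is to establish the corollary by a \emph{monotonicity} argument: feasibility can only improve as the connectivity level grows, so any level dominating $[1~3~3]$ should inherit feasibility directly from Theorem \ref{lemma:feasible}. Concretely, let $G$ be any graph carrying three sessions whose connectivity level $[k_1~k_2~k_3]$ satisfies $k_1 \ge 1$, $k_2 \ge 3$, $k_3 \ge 3$. In the regime $1 \le k_i \le 3$ that we consider here, this forces $k_2 = k_3 = 3$, so the only levels strictly above $[1~3~3]$ are $k_1 \in \{2,3\}$, namely $[2~3~3]$ and $[3~3~3]$ (together with their relabelings, handled by permuting the roles of the sessions in Theorem \ref{lemma:feasible}).

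First I would extract from $G$ a subgraph $G' \subseteq G$ on which the connectivity level is exactly $[1~3~3]$. To do this I select one $s_1$--$t_1$ path, three edge-disjoint $s_2$--$t_2$ paths, and three edge-disjoint $s_3$--$t_3$ paths (all of which exist by hypothesis), and then delete edges so that $\text{max-flow}(s_1,t_1)$ is driven down to $1$. Because $k_2$ and $k_3$ already sit at the maximum value $3$, no reduction is needed for the other two sessions; and by the max-flow/min-cut characterization, cutting $k_1-1$ edges of an $s_1$--$t_1$ minimum cut lowers $\text{max-flow}(s_1,t_1)$ to exactly $1$.

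Having obtained $G'$ with connectivity level exactly $[1~3~3]$, I would invoke Theorem \ref{lemma:feasible} to produce a vector linear network coding solution on $G'$. Finally, exactly as in the remark following the definition of minimality, this solution extends to the full graph $G$ by assigning the all-zero coding vector to every edge of $E \backslash E(G')$; this alters none of the decoding functions, so each terminal still recovers its message. Hence every instance with connectivity level greater than $[1~3~3]$ is feasible.

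The main obstacle is the edge-deletion step: in driving the over-connected session $1$ down to connectivity $1$, one must not inadvertently drop sessions $2$ or $3$ below $3$, and since those are pinned at the maximum, deleting any edge that lies on a retained session-$2$ or session-$3$ path would be fatal. The clean way around this is either to perform the deletion \emph{inside} the seven-path subgraph, cutting session $1$ along $s_1$--$t_1$ cut edges not shared with the retained session-$2$ and session-$3$ paths, or—more simply—to observe that the construction underlying Theorem \ref{lemma:feasible} only ever \emph{uses} a single $s_1$--$t_1$ path and three edge-disjoint paths for each of the other two sessions, and is therefore indifferent to the surplus connectivity of session $1$. Verifying that one of these routes is always available is the one point that requires care; everything else is the routine zero-padding extension.
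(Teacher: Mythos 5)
Your proposal takes essentially the same route as the paper: the paper's proof is precisely ``identify a subgraph $G'$ of $G$ with connectivity level $[1~3~3]$, apply Theorem \ref{lemma:feasible} to $G'$, and extend the code to $G$ by assigning the all-zeros coding vector to the edges of $G \setminus G'$,'' which is your construction verbatim. The one point you flag as still requiring care---pruning session $1$'s surplus connectivity without damaging sessions $2$ and $3$---closes immediately from the paper's standing convention that source nodes have no incoming edges: no edge leaving $s_1$ can lie on any $s_2$--$t_2$ or $s_3$--$t_3$ path, so among $s_1$'s outgoing edges you may delete all except the first edge of the chosen path $P_1$, which drives $\text{max-flow}(s_1,t_1)$ down to exactly $1$ while leaving the other two sessions untouched. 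The same trick applied at $s_2$ and $s_3$ (retain only the first edges of the three chosen paths) also disposes of connectivity levels exceeding $3$ for those sessions, which removes the restriction to $1 \le k_i \le 3$ that you impose at the outset; the corollary as stated covers, e.g., $[1~4~3]$, and the paper's argument (and yours, once so repaired) handles it.
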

\begin{proof}
For the graph $G$ which has connectivity level greater than $[1~3~3]$, we identify a subgraph $G'$ with
connectivity level $[1~3~3]$. By Theorem \ref{lemma:feasible}, the demand at each terminal can be satisfied.
Then by assigning zero coding vector to the edges in $G\setminus G'$, the terminal demand can be satisfied
in the original graph $G$.
\end{proof}

So far, we have completely characterized the cases where the connectivity levels are $[k_1~k_2~k_3]$, $k_i\leq 3$.
However, there are several connectivity levels with unknown feasibility when $k_i>3$, e.g., $[2~2~4]$.
\section{Conclusions and Future Work}
\label{sec:con}

In this work, we have identified several feasible and infeasible
connectivity levels for 3 unicast sessions.
For the feasible instances, we provided explicit network code assignments,
while for the infeasible instances we demonstrated appropriate counter-examples.
Some of these results can be extended to the case of general $n$, and are currently under investigation.

\bibliographystyle{IEEEtran}
\bibliography{unicast}

\begin{thebibliography}{10}
\providecommand{\url}[1]{#1}
\csname url@samestyle\endcsname
\providecommand{\newblock}{\relax}
\providecommand{\bibinfo}[2]{#2}
\providecommand{\BIBentrySTDinterwordspacing}{\spaceskip=0pt\relax}
\providecommand{\BIBentryALTinterwordstretchfactor}{4}
\providecommand{\BIBentryALTinterwordspacing}{\spaceskip=\fontdimen2\font plus
\BIBentryALTinterwordstretchfactor\fontdimen3\font minus
  \fontdimen4\font\relax}
\providecommand{\BIBforeignlanguage}[2]{{%
\expandafter\ifx\csname l@#1\endcsname\relax
\typeout{** WARNING: IEEEtran.bst: No hyphenation pattern has been}%
\typeout{** loaded for the language `#1'. Using the pattern for}%
\typeout{** the default language instead.}%
\else
\language=\csname l@#1\endcsname
\fi
#2}}
\providecommand{\BIBdecl}{\relax}
\BIBdecl

\bibitem{rm}
R.~Koetter and M.~M\'{e}dard, ``{An Algebraic approach to network coding},''
  \emph{IEEE/ACM Trans. on Netw.}, vol. 11, no. 5, pp. 782--795, 2003.

\bibitem{al}
R.~Ahlswede, N.~Cai, S.-Y. Li, and R.~W. Yeung, ``{Network Information Flow},''
  \emph{IEEE Trans. on Info. Th.}, vol. 46, no. 4, pp. 1204--1216, 2000.

\bibitem{Traskov06}
D.~Traskov, N.~Ratnakar, D.~Lun, R.~Koetter, and M.~Medard, ``{Network Coding
  for Multiple Unicasts: An Approach based on Linear Optimization},'' in
  \emph{IEEE Intl. Symposium on Info. Th.}, 2006, pp. 1758--1762.

\bibitem{wangIT10}
C.-C. Wang and N.~B. Shroff, ``{Pairwise Intersession Network Coding on
  Directed Networks},'' \emph{IEEE Trans. on Info. Th.}, vol. 56, no. 8, pp.
  3879--3900, Aug, 2010.

\bibitem{HarveyIT}
N.~Harvey, R.~Kleinberg, and A.~Lehman, ``{On the capacity of information
  networks},'' \emph{IEEE Trans. on Info. Th.}, vol. 52, no. 6, pp. 2345--2364,
  2006.

\bibitem{LiLiunicast}
Z.~Li and B.~Li, ``{Network coding: the Case of Multiple Unicast Sessions},''
  in \emph{42st Allerton Conference on Communication, Control, and Computing},
  2004.

\bibitem{yan06isit}
X.~Yan, R.~W. Yeung, and Z.~Zhang, ``{The Capacity Region for Multi-source
  Multi-sink Network Coding},'' in \emph{IEEE Intl. Symposium on Info. Th.},
  June, 2007, pp. 116--120.

\bibitem{JafarISIT}
A.~Das, S.~Vishwanath, S.~A. Jafar, and A.~Markopoulou, ``{Network Coding for
  Multiple Unicasts: An Interference Alignment Approach},'' in \emph{IEEE Intl.
  Symposium on Info. Th.}, 2010, pp. 1878 -- 1882.

\bibitem{kamal11}
A.~E. Kamal, A.~Ramamoorthy, L.~Long, and S.~Li, ``{Overlay protection against
  link failures using network coding},'' \emph{IEEE/ACM Trans. on Netw.}, (to
  appear),~2011.

\bibitem{5205599}
S.~Li and A.~Ramamoorthy, ``Protection against link errors and failures using
  network coding in overlay networks,'' in \emph{Information Theory, 2009. ISIT
  2009. IEEE International Symposium on}, July 2009, pp. 986 --990.

\bibitem{feder09}
E.~Erez and M.~Feder, ``{Improving the Multicommodity Flow Rate with Network
  Codes for Two Sources},'' \emph{{IEEE JOURNAL ON SELECTED AREAS IN
  COMMUNICATIONS}}, vol. 27, no. 5, pp. 814--824, 2009.

\bibitem{LSB06}
M.~Langberg, A.~Sprintson, and J.~Bruck, ``{The encoding complexity of network
  coding},'' \emph{IEEE Trans. on Info. Th.}, vol. 52, no. 6, pp. 2368--2397,
  2006.

\bibitem{ramamoorthy-langberg}
A.~Ramamoorthy and M.~Langberg, ``{Communicating the sum of sources over a
  network},'' \emph{Submitted. (available at
  http://www.arxiv.org/abs/1001.5319)}.

\end{thebibliography}
\end{document}